%% LyX 2.1.2 created this file.  For more info, see http://www.lyx.org/.
%% Do not edit unless you really know what you are doing.
\documentclass[10pt,twoside,twocolumn,english,aps,manuscript,nofootinbib, nobibnotes, secnumarabic]{revtex4}
\usepackage[T1]{fontenc}
\usepackage[latin9]{inputenc}
\setcounter{secnumdepth}{3}
\usepackage{xcolor}
\usepackage{calc}
\usepackage{units}
\usepackage{amsthm}
\usepackage{amsmath}
\usepackage[all]{xy}
\PassOptionsToPackage{normalem}{ulem}
\usepackage{ulem}

\makeatletter

%%%%%%%%%%%%%%%%%%%%%%%%%%%%%% LyX specific LaTeX commands.
%% Because html converters don't know tabularnewline
\providecommand{\tabularnewline}{\\}
\providecolor{lyxadded}{rgb}{0,0,1}
\providecolor{lyxdeleted}{rgb}{1,0,0}
%% Change tracking with ulem

%%%%%%%%%%%%%%%%%%%%%%%%%%%%%% Textclass specific LaTeX commands.
\@ifundefined{textcolor}{}
{%
 \definecolor{BLACK}{gray}{0}
 \definecolor{WHITE}{gray}{1}
 \definecolor{RED}{rgb}{1,0,0}
 \definecolor{GREEN}{rgb}{0,1,0}
 \definecolor{BLUE}{rgb}{0,0,1}
 \definecolor{CYAN}{cmyk}{1,0,0,0}
 \definecolor{MAGENTA}{cmyk}{0,1,0,0}
 \definecolor{YELLOW}{cmyk}{0,0,1,0}
}
\theoremstyle{plain}
\newtheorem{thm}{\protect\theoremname}
  \theoremstyle{plain}
  \newtheorem{lem}[thm]{\protect\lemmaname}

%%%%%%%%%%%%%%%%%%%%%%%%%%%%%% User specified LaTeX commands.
\usepackage[all]{xy}

\pagestyle {myheadings}

\usepackage{multirow}

\def\frontmatter@abstractheading{}

\makeatother

\usepackage{babel}
  \providecommand{\lemmaname}{Lemma}
\providecommand{\theoremname}{Theorem}

\begin{document}

\title{Generalizing Bell-type and Leggett-Garg-type Inequalities to Systems
with Signaling }

\author{Ehtibar N. Dzhafarov}

\affiliation{Purdue University}

\author{Janne V. Kujala}

\affiliation{University of Jyväskylä}
\begin{abstract}
\mbox{}

Contextuality means non-existence of a joint distribution for random
variables recorded under mutually incompatible conditions, subject
to certain constraints imposed on how the identity of these variables
may change across these conditions. In simple quantum systems contextuality
is indicated by violations of Bell-type or Leggett-Garg-type inequalities.
These inequalities, however, are predicated on the assumption of no-signaling,
defined as invariance of the distributions of measurement results
with respect to other (e.g., earlier in time) measurements' settings.
Signaling makes the inequalities inapplicable: a non-signaling system
with any degree of contextuality, however high, loses any relation
to this concept as soon as it exhibits any degree of signaling, however
small. This is unsatisfactory. We describe a principled way of defining
and measuring contextuality in arbitrary systems with random outputs,
whether signaling{\normalsize{} is absent or present. }{\normalsize \par}

\mbox{}

\textsc{Keywords:} Bell/CHSH inequalities; contextuality; EPR/Bohm
paradigm; Leggett-Garg inequalities; signaling.

\markboth{Dzhafarov and Kujala}{Contextuality on top of Signaling}
\end{abstract}
\maketitle

\section{Introduction}

Contextuality can be defined in purely probabilistic terms, for abstract
systems with random outputs recorded under different (mutually incompatible)
conditions \cite{1Specker,2Larsson,3Khrennikov,4Cabello,5DK2013LNCS,6DK2014PLOS1,7DK2014arxiv}.
Consider, e.g., $\left(X_{1},Y_{1},Z_{1},\ldots\right)$ recorded
under condition $c_{1}$, $\left(X_{2},Y_{2},Z_{2},\ldots\right)$
recorded under condition $c_{2}$, etc. The notion of contextuality
involves a hypothesis that certain random variables preserve their
identity across some of the different conditions: e.g., that $X_{1}=X_{2}$.
The system exhibits no contextuality (with respect to this hypothesis)
if all the random variables $\left(X_{i},Y_{i},Z_{i},\ldots\right)$
across different values of $i$ can be viewed as jointly distributed
with $X_{1}$ and $X_{2}$ being always equal to each other. In the
Kolmogorovian probability theory, being jointly distributed is equivalent
to the random outputs being (measurable) functions of one and the
same (``hidden'') random variable $\lambda$ \cite{8NHMT}:
\begin{equation}
X_{i}=x_{i}\left(\lambda\right),Y_{i}=y_{i}\left(\lambda\right),Z_{i}=z_{i}\left(\lambda\right),\ldots.
\end{equation}
The constraint $X_{1}=X_{2}$ means
\begin{equation}
\Pr\left[X_{1}\not=X_{2}\right]=\Pr\left[\lambda:x_{1}\left(\lambda\right)\not=x_{2}\left(\lambda\right)\right]=0.
\end{equation}
As a well-known example, in the simplest Alice-Bob EPR/Bohm paradigm
\cite{9CHSH,10CH}, the four mutually incompatible conditions $\left(\alpha_{i},\beta_{j}\right)$
are formed by Alice's settings $\alpha_{1}$ or $\alpha_{2}$ combined
with Bob's settings $\beta_{1}$ or $\beta_{2}$. Under each condition
$\left(\alpha_{i},\beta_{j}\right)$, Alice and Bob record spins represented
by binary ($\pm1$) random variables $A_{ij}$ and $B_{ij}$, respectively.
We will refer to a system with this input-output relation as a Bell-system.
It involves eight random variables, with the joint distribution being
known for each pair $\left(A_{ij},B_{ij}\right)$ but not across different
pairs. The identity hypothesis here is that $A_{i1}=A_{i2}$ for $i=1,2$,
and $B_{1j}=B_{2j}$ for $j=1,2$. Stated rigorously, if one can impose
a joint distribution on all eight random variables consistent with
the known distributions of $\left(A_{ij},B_{ij}\right)$ and constrained
by the requirement
\begin{equation}
\Pr\left[A_{i1}\not=A_{i2}\right]=\Pr\left[B_{1j}\not=B_{2j}\right]=0,\; i,j\in\left\{ 1,2\right\} ,\label{eq:identity connections B}
\end{equation}
then the Bell system exhibits no contextuality.

Similarly, in the simplest Leggett-Garg paradigm \cite{11Leggett},
there are three mutually exclusive conditions $\left(t_{1},t_{2}\right)$,
$\left(t_{1},t_{3}\right)$, and $\left(t_{2},t_{3}\right)$, formed
by three fixed time moments $t_{1}<t_{2}<t_{3}$. The two binary ($\pm1$)
random outputs jointly recorded at moments $t_{i}<t_{j}$ can be denoted
$Q_{ij}$ and $Q_{ji}$, respectively. We will refer to a system with
this input-output relation as an LG-system. It involves six random
variables, with the joint distribution known for each pair $\left(Q_{ij},Q_{ji}\right)$
but not across different pairs. The identity hypothesis here is that
$Q_{12}=Q_{13}$, $Q_{21}=Q_{23}$, and $Q_{31}=Q_{32}$. The LG-system
exhibits no contextuality if one can impose a joint distribution on
all six random variables consistent with the known distributions of
the pairs $\left(Q_{ij},Q_{ji}\right)$ and subject to
\begin{equation}
\Pr\left[Q_{12}\not=Q_{13}\right]=\Pr\left[Q_{21}\not=Q_{23}\right]=\Pr\left[Q_{31}\not=Q_{32}\right]=0.\label{eq:identity connections LG}
\end{equation}

The issue we take on in this paper is related to the fact that non-contextuality
defined as above implies the condition known as marginal selectivity
\cite{8NHMT,12DK2013ProcAMS} or no-signaling \cite{13Cereceda,14Masanes}:
obviously, any set of random variables whose identity is preserved
across different conditions preserves its distribution across these
conditions. For the Bell-systems, no-signaling means, using $\left\langle .\right\rangle $
for expected value,
\begin{equation}
\left\langle A_{i1}\right\rangle =\left\langle A_{i2}\right\rangle ,\;\left\langle B_{1j}\right\rangle =\left\langle B_{2j}\right\rangle ,\; i,j\in\left\{ 1,2\right\} ,\label{eq:marginal selectivity B}
\end{equation}
while for the LG-systems it means
\begin{equation}
\left\langle Q_{ij}\right\rangle =\left\langle Q_{ij'}\right\rangle ,\; i,j,j'\in\left\{ 1,2,3\right\} ,\; i\not=j,\; i\not=j'.\label{eq:marginal selectivity LG}
\end{equation}
The necessary and sufficient condition for non-contextuality in the
two types of systems are obtained as conjunctions of the no-signaling
requirements just given with certain inequalities involving jointly
distributed pairs: for the Bell-systems it is the conjunction of (\ref{eq:marginal selectivity B})
with the CHSH inequality \cite{15Fine}

\begin{equation}
\max_{i,j\in\left\{ 1,2\right\} }\left|\begin{array}{l}
\left\langle A_{11}B_{11}\right\rangle +\left\langle A_{12}B_{12}\right\rangle \\
+\left\langle A_{21}B_{21}\right\rangle +\left\langle A_{22}B_{22}\right\rangle -2\left\langle A_{ij}B_{ij}\right\rangle 
\end{array}\right|\leq2,\label{eq:CHSH}
\end{equation}
while for the LG-systems it is the conjunction of (\ref{eq:marginal selectivity LG})
with the Leggett-Garg-Suppes-Zanotti (LGSZ) inequality \cite{11Leggett,16Suppes,17note1}

\begin{equation}
\begin{array}{l}
-1\le\left\langle Q_{12}Q_{21}\right\rangle +\left\langle Q_{13}Q_{31}\right\rangle +\left\langle Q_{23}Q_{32}\right\rangle \\
\leq1+2\min\left\{ \left\langle Q_{12}Q_{21}\right\rangle ,\left\langle Q_{13}Q_{31}\right\rangle ,\left\langle Q_{23}Q_{32}\right\rangle \right\} .
\end{array}\label{eq:LGSZ}
\end{equation}
The inequalities are logically independent of the corresponding no-signaling
conditions: one can construct examples of systems with all four combinations
of truth values for (\ref{eq:marginal selectivity B}) and (\ref{eq:CHSH}),
or for (\ref{eq:marginal selectivity LG}) and (\ref{eq:LGSZ}) \cite{18DK2013Topics}.

Logically, then, we should consider a Bell-system exhibiting contextuality
if either CHSH inequalities (\ref{eq:CHSH}) are violated or no-signaling
condition (\ref{eq:marginal selectivity B}) is violated (or both);
and analogously for the LG-systems. However, to posit that any instance
of signaling constitutes contextuality amounts to unreasonably expanding
the meaning of contextuality, and it contradicts the common usage.
If changes in Bob's setting somehow change the distribution of spins
recorded by Alice under a fixed setting (assuming the two are separated
by a time-like interval), the natural language to use is that of direct
cross-influences rather than contextuality. But it is equally unsatisfactory
to declare (non-)contextuality undefined whenever signaling is present.
Consider, e.g., a Bell system with
\begin{equation}
\begin{array}{c}
\left\langle A_{11}B_{11}\right\rangle =\left\langle A_{12}B_{12}\right\rangle =\left\langle A_{21}B_{21}\right\rangle =-\left\langle A_{22}B_{22}\right\rangle =\delta,\\
\left\langle A_{11}\right\rangle =\left\langle B_{11}\right\rangle =\left\langle A_{12}\right\rangle =\left\langle B_{12}\right\rangle =\left\langle A_{21}\right\rangle =\left\langle B_{21}\right\rangle =0,\\
\left\langle A_{22}\right\rangle =-\left\langle B_{22}\right\rangle =\varepsilon.
\end{array}\label{eq:PR+signaling}
\end{equation}
It satisfies the no-signaling condition (\ref{eq:marginal selectivity B})
if and only if $\varepsilon=0$. In this case, for any $\delta>\nicefrac{1}{2}$,
it violates CHSH inequalities (\ref{eq:CHSH}), indicating thereby
contextuality. If the degree of contextuality is measured as proportional
to the excess of the left-hand side of (\ref{eq:CHSH}) over $2$,
the maximum contextuality allowed by quantum mechanics \cite{19Cirel'son}
is achieved at $\delta=\nicefrac{1}{\sqrt{2}}$, whereas $\delta=1$
represents a Bell-system with maximum contextuality algebraically
possible \cite{20Popescu}. But as soon as $\varepsilon$ differs
from zero, however slightly, contextuality changes from a very high
(even highest possible) level to being undefined. Among other things,
this creates difficulties for statistical analysis of contextuality,
where one can never establish with certainty that equalities (\ref{eq:marginal selectivity B})
and (\ref{eq:marginal selectivity LG}) hold precisely. 

In this paper we propose a new definition and new measure of contextuality
that overcome this difficulty: even in the presence of direct cross-influences
(say, from Bob's setting to Alice's measurements and vice versa) one
can identify and compute the degree of contextual influences ``on
top of'' the direct cross-influences.

\section{Criterion for (Non)Contextuality}

The main idea is this: contextuality is present if random variables
recorded under different conditions cannot be presented as a single
system of jointly distributed random variables, provided their identity
across different conditions changes as little as it is possible in
view of the observed differences between marginal distributions (i.e.,
in view of signaling). 

For a Bell-system, we consider the vector of probabilities \cite{note2}
\begin{equation}
C=\left(\begin{array}{l}
\Pr\left[A_{11}\not=A_{12}\right],\Pr\left[A_{21}\not=A_{22}\right],\\
\Pr\left[B_{11}\not=B_{21}\right],\Pr\left[B_{12}\not=B_{22}\right]
\end{array}\right)
\end{equation}
and find the minimum possible values of these probabilities allowed
by the system's marginal expectations 
\begin{equation}
\left(\begin{array}{l}
\left\langle A_{11}\right\rangle ,\left\langle A_{12}\right\rangle ,\left\langle A_{21}\right\rangle ,\left\langle A_{22}\right\rangle ,\\
\left\langle B_{11}\right\rangle ,\left\langle B_{21}\right\rangle ,\left\langle B_{12}\right\rangle ,\left\langle B_{22}\right\rangle 
\end{array}\right).\label{eq:state signaling B}
\end{equation}
Denote this vector $C$ by $C_{0}$. It is specified as follows.
\begin{lem}
\label{lem:C0 for B}Given marginals (\ref{eq:state signaling B})
of a Bell-system, 
\begin{equation}
C_{0}=\left(\begin{array}{l}
\frac{1}{2}\left|\left\langle A_{11}\right\rangle -\left\langle A_{12}\right\rangle \right|,\frac{1}{2}\left|\left\langle A_{21}\right\rangle -\left\langle A_{22}\right\rangle \right|,\\
\frac{1}{2}\left|\left\langle B_{11}\right\rangle -\left\langle B_{21}\right\rangle \right|,\frac{1}{2}\left|\left\langle B_{12}\right\rangle -\left\langle B_{22}\right\rangle \right|
\end{array}\right).\label{eq:C_0 general B}
\end{equation}

\end{lem}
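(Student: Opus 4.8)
The plan is to reduce the joint minimization of the four entries of $C$ to four independent two-variable problems and then to solve each of those by an elementary coupling argument. The key structural observation is that the four pairs entering $C$, namely $\left(A_{11},A_{12}\right)$, $\left(A_{21},A_{22}\right)$, $\left(B_{11},B_{21}\right)$, and $\left(B_{12},B_{22}\right)$, are built from pairwise disjoint subsets of the eight random variables, and together they exhaust all eight. Consequently, each entry of $C$ depends on a putative joint distribution of the eight variables only through the bivariate marginal of the corresponding pair; and any choice of bivariate distributions for these four pairs that is compatible with the prescribed one-dimensional expectations in (\ref{eq:state signaling B}) can be realized simultaneously by a single joint distribution — for instance, by declaring the four pairs mutually independent. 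Hence the componentwise minimum of $C$ is attained jointly and is obtained by minimizing each $\Pr\left[U\neq V\right]$ separately over all couplings of a single pair $U,V$ of $\pm1$ variables with fixed means $\left\langle U\right\rangle$, $\left\langle V\right\rangle$.

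For one such pair I would establish $\min\Pr\left[U\neq V\right]=\tfrac12\left|\left\langle U\right\rangle-\left\langle V\right\rangle\right|$. The lower bound is immediate: $\left\langle U\right\rangle-\left\langle V\right\rangle=2\bigl(\Pr\left[U=1\right]-\Pr\left[V=1\right]\bigr)=2\bigl(\Pr\left[U=1,V=-1\right]-\Pr\left[U=-1,V=1\right]\bigr)$, whose absolute value is at most $\Pr\left[U=1,V=-1\right]+\Pr\left[U=-1,V=1\right]=\Pr\left[U\neq V\right]$. For attainability I would exhibit the monotone coupling: with $\lambda$ uniform on $\left[0,1\right]$, set $U=1$ iff $\lambda\leq\Pr\left[U=1\right]$ and $V=1$ iff $\lambda\leq\Pr\left[V=1\right]$; then $\left\{U\neq V\right\}$ is the interval between $\Pr\left[U=1\right]$ and $\Pr\left[V=1\right]$, of length $\left|\Pr\left[U=1\right]-\Pr\left[V=1\right]\right|=\tfrac12\left|\left\langle U\right\rangle-\left\langle V\right\rangle\right|$. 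Equivalently, one may view the $2\times2$ table of joint probabilities as a transportation problem with the two row sums and two column sums fixed, and read off that the off-diagonal mass is minimized by pushing as much probability as possible onto the diagonal. Substituting the four relevant pairs of marginals then reproduces (\ref{eq:C_0 general B}) entry by entry.

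Finally I would assemble the pieces: the product of the four optimal pair-couplings is a joint distribution of $\left(A_{11},\ldots,B_{22}\right)$ with exactly the prescribed one-dimensional expectations and with $C$ equal to the right-hand side of (\ref{eq:C_0 general B}), while the per-pair lower bounds of the previous paragraph show no joint distribution can make any entry of $C$ smaller. I do not anticipate a genuine obstacle; the only point requiring care is the decoupling step — making explicit that minimizing the vector $C$ does not create competing constraints among its entries, which is precisely what the partition property of the four pairs guarantees. Once that is in place, the remainder is the standard one-dimensional coupling computation.
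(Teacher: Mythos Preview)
Your proposal is correct and follows essentially the same approach as the paper: both reduce to the elementary $2\times2$ coupling computation showing that for $\pm1$ variables $U,V$ with fixed means, $\min\Pr[U\neq V]=\tfrac12|\langle U\rangle-\langle V\rangle|$. Your treatment is more thorough in that you make the decoupling step explicit (the four pairs partition the eight variables, so the componentwise minima are simultaneously attainable via a product coupling) and you give an alternative attainability construction via the monotone coupling, whereas the paper simply reads off the minimum from the $2\times2$ table for one pair and leaves the rest implicit.
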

The proof of this and subsequent formal statements is relegated to
Appendix. Note that under no-signaling we have $C_{0}=\mathbf{0}$,
in accordance with (\ref{eq:identity connections B}). The question
we ask is whether this $C_{0}$ is compatible with the observed distributions
of the pairs $\left(A_{ij},B_{ij}\right)$. If it is, the Bell-system
exhibits no contextuality. If it is not, then contextuality is present,
and a measure of its degree is easily computed as shown below. 

The compatibility of $C_{0}$ with the observed pairs of random outputs
means that a joint distribution can be imposed on all eight random
variables so that it is consistent with both $C_{0}$ and the observed
pairs. In other words, each of the $2^{8}$ possible combinations
\begin{equation}
A_{11}=\pm1,B_{11}=\pm1,\ldots,A_{22}=\pm1,B_{22}=\pm1
\end{equation}
can be assigned a probability, so that the probabilities for all combinations
containing, say, $A_{12}=1$ and $B_{12}=-1$ sum to the observed
$\Pr\left[A_{12}=1,B_{12}=-1\right]$; and the probabilities for all
combinations containing unequal values of, say, $B_{12}$ and $B_{22}$
sum to $\Pr\left[B_{12}\not=B_{22}\right]$ in $C_{0}$. 
\begin{thm}[non-contextuality criterion for Bell-systems]
\label{thm:main B} A Bell-system exhibits no contextuality, i.e.,
$C_{0}$ in (\ref{eq:C_0 general B}) is compatible with the observed
pairs \textup{$\left(A_{ij},B_{ij}\right)_{i,j\in\left\{ 1,2\right\} }$}\textup{\emph{,
if and only if}}\emph{ }\textup{\emph{
\begin{equation}
\max_{i,j\in\left\{ 1,2\right\} }\left|\begin{array}{l}
\left\langle A_{11}B_{11}\right\rangle +\left\langle A_{12}B_{12}\right\rangle \\
+\left\langle A_{21}B_{21}\right\rangle +\left\langle A_{22}B_{22}\right\rangle -2\left\langle A_{ij}B_{ij}\right\rangle 
\end{array}\right|\leq2(1+\Delta_{0}),\label{eq:CHSH general}
\end{equation}
where $\Delta_{0}$ is the sum of the components of $C_{0}$,
\begin{equation}
\Delta_{0}=\frac{1}{2}\left(\begin{array}{l}
\left|\left\langle A_{11}\right\rangle -\left\langle A_{12}\right\rangle \right|+\left|\left\langle A_{21}\right\rangle -\left\langle A_{22}\right\rangle \right|\\
+\left|\left\langle B_{11}\right\rangle -\left\langle B_{21}\right\rangle \right|+\left|\left\langle B_{12}\right\rangle -\left\langle B_{22}\right\rangle \right|
\end{array}\right).\label{eq:Delta_0}
\end{equation}
}}
\end{thm}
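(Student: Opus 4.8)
The plan is to reduce the compatibility question to a feasibility problem for a joint distribution on eight $\pm1$ variables and then apply Fine-type arguments. First I would recall Fine's theorem in the form needed: a system of probabilities on the four observed pairs $(A_{ij},B_{ij})$ admits a joint distribution of all eight variables in which, for the connecting pairs, the prescribed disagreement probabilities are exactly the components of a given vector $C$, if and only if certain linear inequalities in the pairwise correlations and the $C$-components hold. So the first step is to set up the right linear-programming / polytope description: the coupling must match the observed bivariate distributions $\Pr[A_{ij}=\pm1,B_{ij}=\pm1]$ on each of the four ``intra-condition'' pairs, and on each of the four ``inter-condition'' connecting pairs $(A_{11},A_{12})$, $(A_{21},A_{22})$, $(B_{11},B_{21})$, $(B_{12},B_{22})$ it must produce the disagreement probability specified by $C_0$. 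Because all variables are dichotomous, every bivariate marginal is fixed once its two univariate means and its product moment are fixed, so specifying $C_0$ is the same as specifying the product moments of the connecting pairs: $\langle A_{i1}A_{i2}\rangle = 1-2\,\Pr[A_{i1}\ne A_{i2}]$, and with $C_0$ as in Lemma~\ref{lem:C0 for B} this gives $\langle A_{i1}A_{i2}\rangle = 1-|\langle A_{i1}\rangle-\langle A_{i2}\rangle|$ and analogously for the $B$'s.

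Next I would invoke the classical characterization (Fine; see also \cite{15Fine,18DK2013Topics}) that for eight dichotomous variables arranged so that the known joint marginals are exactly the eight pairs above (four ``measured'' pairs and four ``connecting'' pairs, forming an even cycle on the eight vertices), a consistent global joint distribution exists if and only if a finite list of linear inequalities holds: each such inequality says that an odd-weighted signed sum of the eight pairwise correlations around the cycle is bounded by $6$ in absolute value minus the appropriate slack, or more transparently, that for every choice of signs with an odd number of $-1$'s the cyclic sum of correlations is $\le$ (number of terms) $-2$. The key algebraic step is then to substitute the four connecting correlations $1-|\langle A_{i1}\rangle-\langle A_{i2}\rangle|$, etc., into these cyclic inequalities and show that, after the substitution, the whole family collapses to the single displayed inequality~\eqref{eq:CHSH general}. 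Concretely, the sum of the four ``$1-|\cdots|$'' terms contributes $4-2\Delta_0$, and each cyclic inequality of the eight-cycle rearranges so that the four measured correlations $\langle A_{ij}B_{ij}\rangle$ appear with signs that reproduce exactly the four sign patterns inside the $\max$ in~\eqref{eq:CHSH}, while the constant $4-2\Delta_0$ combines with the bound to yield $2(1+\Delta_0)$ on the right.

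The main obstacle I anticipate is bookkeeping: showing that the (a priori larger) family of cyclic inequalities for the eight-variable cycle reduces \emph{exactly} — no more, no less — to the four CHSH-type expressions, and verifying that the contributions of the connecting correlations are consistent in sign across all the relevant cyclic patterns (in particular that the absolute-value terms always enter with the favorable sign, so that $C_0$ really is the \emph{minimal} choice and no smaller $C$ could do better). I would handle this by writing the eight-cycle as $A_{11}-B_{11}-\cdots$ in a fixed order, enumerating the $2^{3}=8$ odd-sign cyclic inequalities (using that negating all signs gives an equivalent inequality, halving the $2^{4}$ patterns), and checking that precisely four of them are nontrivial after substitution and that these four are the CHSH ones; the other four become automatically satisfied. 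A secondary point to address is the ``only if'' direction sharpness: I must confirm that when~\eqref{eq:CHSH general} holds with equality the coupling still exists (closedness of the polytope), and when it fails no coupling with \emph{this particular} $C_0$ exists — the latter is immediate from the forward implication of Fine's characterization. Finally I would note that the degree of contextuality promised in the text is read off as the amount by which the left side of~\eqref{eq:CHSH general} exceeds $2(1+\Delta_0)$, divided by the natural normalization, which requires no further argument once the criterion is established.
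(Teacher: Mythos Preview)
Your plan and the paper's proof start from the same foundation: the necessary-and-sufficient condition for a given set of four connection distributions to be jointly realizable with the four observed pairs is the family of odd-parity cycle inequalities on the eight edges of the diagram (the paper's Lemma~A.1, obtained there by facet enumeration). Where you diverge is in the last step. The paper does \emph{not} substitute $C_{0}$ directly; instead it eliminates the four connection-correlation variables from Lemma~A.1 by Fourier--Motzkin (Lemma~A.3), obtaining a closed-form range for $\Delta$, and then reads off $\Delta_{\min}=\max\bigl(\Delta_{0},\,-1+\tfrac{1}{2}s_{1}(\langle A_{ij}B_{ij}\rangle)\bigr)$; the criterion $\Delta_{\min}=\Delta_{0}$ is exactly \eqref{eq:CHSH general}. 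The advantage of this detour is that the ``other'' half of the cycle inequalities is absorbed into the upper bounds on $\Delta$, which are then checked (``easily'') to dominate the lower bounds, so nothing further has to be argued.

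Your direct-substitution route can be made to work, but as written it has two gaps. First, the count is off: on an eight-edge cycle there are $2^{7}=128$ odd-parity inequalities, splitting into the two $64$-element families $s_{1}(\text{obs})+s_{0}(\text{conn})\le 6$ and $s_{0}(\text{obs})+s_{1}(\text{conn})\le 6$; it is not a matter of ``eight inequalities, four nontrivial.'' Second, and more substantively, only the first family collapses to \eqref{eq:CHSH general} when you set each connection correlation to $1-2(C_{0})_{k}$ (because with all $c_{k}\ge 0$ the even-parity maximum $s_{0}(c)=\sum c_{k}=4-2\Delta_{0}$). The second family, $s_{0}(\text{obs})\le 6-s_{1}(c)$, is \emph{not} implied by \eqref{eq:CHSH general}; it is implied only once you also invoke the trivial marginal bounds $-1+|\langle A_{ij}\rangle+\langle B_{ij}\rangle|\le\langle A_{ij}B_{ij}\rangle\le 1-|\langle A_{ij}\rangle-\langle B_{ij}\rangle|$ and trace how these interact with the components of $C_{0}$. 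You assert this redundancy but do not prove it; that is the missing step. Finally, the result you call ``Fine's characterization'' is for four variables with identified marginals, not for an eight-cycle with arbitrary marginals; the relevant statement is the cycle-polytope description (Lemma~A.1 here), which the paper establishes computationally rather than by citation.
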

For the LG-system the situation is analogous. We consider a vector
of probabilities 
\begin{equation}
C'=\left(\Pr\left[Q_{12}\not=Q_{13}\right],\Pr\left[Q_{21}\not=Q_{23}\right],\Pr\left[Q_{31}\not=Q_{32}\right]\right)
\end{equation}
and determine $C_{0}'$ with the minimum values of these probabilities
allowed by the system's marginals 
\begin{equation}
\left(\left\langle Q_{12}\right\rangle ,\left\langle Q_{13}\right\rangle ,\left\langle Q_{21}\right\rangle ,\left\langle Q_{23}\right\rangle ,\left\langle Q_{31}\right\rangle ,\left\langle Q_{32}\right\rangle \right).\label{eq:state signaling LG}
\end{equation}

\begin{lem}
\label{lem:C0 for LG}Given marginals (\ref{eq:state signaling LG})
of an LG-system, 
\begin{equation}
C'_{0}=\left(0,\frac{1}{2}\left|\left\langle Q_{21}\right\rangle -\left\langle Q_{23}\right\rangle \right|,\frac{1}{2}\left|\left\langle Q_{31}\right\rangle -\left\langle Q_{32}\right\rangle \right|\right).\label{eq:C_0 general LG}
\end{equation}

\end{lem}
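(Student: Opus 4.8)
The plan is to identify $C'_{0}$ with the vector of coordinatewise minima of the three disagreement probabilities $\Pr\left[Q_{12}\neq Q_{13}\right]$, $\Pr\left[Q_{21}\neq Q_{23}\right]$, $\Pr\left[Q_{31}\neq Q_{32}\right]$, the minima being taken over all joint distributions of the six random variables that reproduce the one-variable marginals (\ref{eq:state signaling LG}), and then to exhibit a single such joint distribution attaining all three minima at once. The observation that makes this go through is that the three identity-pairs $\left\{Q_{12},Q_{13}\right\}$, $\left\{Q_{21},Q_{23}\right\}$, $\left\{Q_{31},Q_{32}\right\}$ are pairwise disjoint -- each is just the pair of outputs recorded at one of the moments $t_{1},t_{2},t_{3}$ -- so the three one-dimensional minimization problems decouple completely.

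For the lower bounds I would reuse the elementary estimate already behind Lemma~\ref{lem:C0 for B}: for $\pm1$ variables $U,V$ on a common probability space, $U-V$ lies in $\left\{-2,0,2\right\}$ and is nonzero only on the event $\left\{U\neq V\right\}$, so $\left|\left\langle U\right\rangle -\left\langle V\right\rangle \right|=\left|\left\langle U-V\right\rangle \right|\leq2\Pr\left[U\neq V\right]$. Applied to the last two identity-pairs this yields $\Pr\left[Q_{21}\neq Q_{23}\right]\geq\frac{1}{2}\left|\left\langle Q_{21}\right\rangle -\left\langle Q_{23}\right\rangle \right|$ and $\Pr\left[Q_{31}\neq Q_{32}\right]\geq\frac{1}{2}\left|\left\langle Q_{31}\right\rangle -\left\langle Q_{32}\right\rangle \right|$. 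For the first pair the estimate degenerates to the trivial $\Pr\left[Q_{12}\neq Q_{13}\right]\geq0$, because in an LG-system $Q_{12}$ and $Q_{13}$ both record the outcome of the single measurement made at the earliest moment $t_{1}$, which no later measurement can influence; hence $\left\langle Q_{12}\right\rangle =\left\langle Q_{13}\right\rangle $ and $\frac{1}{2}\left|\left\langle Q_{12}\right\rangle -\left\langle Q_{13}\right\rangle \right|=0$.

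To see that these bounds are tight, and simultaneously so, I would use the maximal coupling of two $\pm1$ variables with prescribed means $u\geq v$: the distribution putting mass $\frac{1}{2}\left(1+v\right)$ on $\left(1,1\right)$, $\frac{1}{2}\left(1-u\right)$ on $\left(-1,-1\right)$, $\frac{1}{2}\left(u-v\right)$ on $\left(1,-1\right)$, and nothing on $\left(-1,1\right)$ has the right marginals and disagreement probability exactly $\frac{1}{2}\left|u-v\right|$ (symmetrically when $u<v$, and in particular it makes the two variables equal almost surely when $u=v$). Carrying out this construction independently inside each of the three disjoint identity-pairs and taking the product measure produces a joint distribution of all six variables with the marginals (\ref{eq:state signaling LG}) whose disagreement-probability vector is precisely $\left(0,\frac{1}{2}\left|\left\langle Q_{21}\right\rangle -\left\langle Q_{23}\right\rangle \right|,\frac{1}{2}\left|\left\langle Q_{31}\right\rangle -\left\langle Q_{32}\right\rangle \right|\right)$; combined with the lower bounds this pins the componentwise minimum down to $C'_{0}$.

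The one step that actually carries weight is the vanishing of the first coordinate, i.e. the claim that $\left\langle Q_{12}\right\rangle =\left\langle Q_{13}\right\rangle $ necessarily holds in an LG-system; without it the first coordinate of the minimum would be $\frac{1}{2}\left|\left\langle Q_{12}\right\rangle -\left\langle Q_{13}\right\rangle \right|$ rather than $0$. This is exactly where the temporal asymmetry of the Leggett-Garg paradigm -- absent from the Bell paradigm, in which by Lemma~\ref{lem:C0 for B} no coordinate of $C_{0}$ is forced to vanish -- is used: the outcome recorded at $t_{1}$ cannot depend on whether its companion measurement is scheduled for $t_{2}$ or for $t_{3}$. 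Everything else is the two-variable coupling bound and its tightness, exactly as in the proof of Lemma~\ref{lem:C0 for B}, plus the harmless remark that couplings of disjoint pairs may be chosen independently.
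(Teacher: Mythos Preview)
Your proof is correct and follows essentially the same approach as the paper: the paper treats each connection pair separately, reading off from the $2\times2$ joint-distribution table that the minimal disagreement probability equals $\frac{1}{2}\left|\left\langle U\right\rangle-\left\langle V\right\rangle\right|$, and invokes causality for the vanishing of the first coordinate. Your version differs only cosmetically---you obtain the lower bound via $\left|\left\langle U-V\right\rangle\right|\leq 2\Pr[U\neq V]$ rather than the table, and you make explicit the product-coupling step showing the three minima are simultaneously attainable---but the substance is the same.
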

Note that, by causality considerations, $\left|\left\langle Q_{12}\right\rangle -\left\langle Q_{13}\right\rangle \right|$
in $C'_{0}$ must equal zero (but it need not be in a generalized
treatment, if $t_{1},t_{2},t_{3}$ are treated as labels other than
time moments). 
\begin{thm}[non-contextuality criterion for LG-systems]
\label{thm:main LG} An LG-system exhibits no contextuality, i.e.,
\textup{$C'_{0}$} in (\ref{eq:C_0 general LG}) is compatible with
the observed pairs $\left(Q_{12},Q_{21}\right),\left(Q_{13},Q_{31}\right),\left(Q_{23},Q_{32}\right)$,
if and only if
\begin{equation}
\begin{array}{l}
-1-2\Delta'_{0}\le\left\langle Q_{12}Q_{21}\right\rangle +\left\langle Q_{13}Q_{31}\right\rangle +\left\langle Q_{23}Q_{32}\right\rangle \\
\leq1+2\Delta'_{0}+2\max\left\{ \left\langle Q_{12}Q_{21}\right\rangle ,\left\langle Q_{13}Q_{31}\right\rangle ,\left\langle Q_{23}Q_{32}\right\rangle \right\} ,
\end{array}\label{eq:LG general}
\end{equation}
where\emph{ $\Delta'_{0}$ }\textup{\emph{is the sum of the components
of $C'_{0}$,
\begin{equation}
\Delta_{0}^{'}=\frac{1}{2}\left(\left|\left\langle Q_{21}\right\rangle -\left\langle Q_{23}\right\rangle \right|+\left|\left\langle Q_{31}\right\rangle -\left\langle Q_{32}\right\rangle \right|\right).\label{eq:Delta_0 prime}
\end{equation}
}}
\end{thm}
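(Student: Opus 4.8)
The plan is to recast the compatibility assertion as a small linear-feasibility problem and solve it explicitly, in close analogy with the treatment of Theorem~\ref{thm:main B}. Because the first coordinate of $C'_0$ in (\ref{eq:C_0 general LG}) is zero, any joint law that witnesses compatibility must satisfy $Q_{12}=Q_{13}$ almost surely; write $R_1$ for this common $\pm1$ variable. The other two coordinates of $C'_0$, together with the one-variable marginals (\ref{eq:state signaling LG}), pin down the \emph{full} bivariate laws of the two connection pairs $(Q_{21},Q_{23})$ and $(Q_{31},Q_{32})$ --- a $\pm1$ pair being fixed by its two means and its probability of disagreement --- so that $\langle Q_{21}Q_{23}\rangle=1-2(C'_0)_2$ and $\langle Q_{31}Q_{32}\rangle=1-2(C'_0)_3$, realized by the comonotone coupling. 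Compatibility of $C'_0$ thus becomes a gluing question: five mutually consistent bivariate $\pm1$ laws are prescribed on the five edges of the cycle $R_1-Q_{21}-Q_{23}-Q_{32}-Q_{31}-R_1$ (three of them the observed laws of $(Q_{12},Q_{21})$, $(Q_{13},Q_{31})$, $(Q_{23},Q_{32})$; two the connection laws just computed), and one asks whether a joint law on these five variables restricts to all of them.

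For the ``only if'' direction I would argue directly. Given a compatible joint law, its marginal on any three of the five variables is a genuine $\pm1$-triple law and hence satisfies $\langle XZ\rangle+\langle YZ\rangle-\langle XY\rangle\le1$ together with its two cyclic variants, and $\langle XY\rangle+\langle XZ\rangle+\langle YZ\rangle\ge-1$. I would apply these to the triple $(R_1,Q_{23},Q_{32})$ and then transport the $R_1$-correlations from $Q_{23},Q_{32}$ over to $Q_{21},Q_{31}$: since $|\langle R_1(Q_{21}-Q_{23})\rangle|\le2\Pr[Q_{21}\ne Q_{23}]=2(C'_0)_2$ and likewise with $Q_{31},Q_{32}$, each bound acquires a total slack $2\Delta'_0$, and using $\langle Q_{12}Q_{21}\rangle=\langle R_1Q_{21}\rangle$ and $\langle Q_{13}Q_{31}\rangle=\langle R_1Q_{31}\rangle$ the four bounds become exactly the inequalities (\ref{eq:LG general}).

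The ``if'' direction is the substantive part, and the step I expect to be the real obstacle: one must actually exhibit a joint law on $R_1,Q_{21},Q_{23},Q_{31},Q_{32}$ matching all five edge-laws whenever (\ref{eq:LG general}) holds, while holding the two connection couplings at their extremal comonotone values. I would attack it by conditioning on $R_1=\pm1$: given $R_1$ the conditional laws of $Q_{21}$ and of $Q_{31}$ are forced, so the entire residual freedom lies in how $(Q_{23},Q_{32})$ is attached to $(R_1,Q_{21},Q_{31})$. I would write the desired law as a convex mixture of a short list of deterministic ``glued'' configurations --- those in which $Q_{23}$ agrees with $Q_{21}$ and $Q_{32}$ with $Q_{31}$ off the minimal forced exceptional sets --- and choose the mixing weights so that the induced law of $(Q_{23},Q_{32})$ is the observed one; the resulting linear system for the weights is solvable in nonnegatives precisely when (\ref{eq:LG general}) holds, and establishing that equivalence is the crux. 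A less hands-on alternative is to perform Fourier--Motzkin elimination on the $2^5$ cell probabilities of the $5$-cycle (equivalently, apply Farkas' lemma to its dual), eliminate the free higher moments, and simplify; the only delicate bookkeeping there is tracking the signs of $\langle Q_{21}\rangle-\langle Q_{23}\rangle$ and $\langle Q_{31}\rangle-\langle Q_{32}\rangle$, which is what fixes the precise right-hand side of (\ref{eq:LG general}) and, on setting $C'_0=\mathbf{0}$, recovers the no-signaling inequality (\ref{eq:LGSZ}).
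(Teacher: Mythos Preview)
Your route is genuinely different from the paper's. The paper does not fix the connections at $C'_0$ and argue directly; instead it proves (computer-assisted, via facet enumeration followed by Fourier--Motzkin elimination) Lemma~\ref{lem:2-1}, which characterises the full interval of values of $\Delta'$ that can be realised by \emph{some} connections compatible with the observed pairs. Theorem~\ref{thm:main LG} then drops out by observing that $\Delta'_0$ already equals the marginal lower bound in that system, so non-contextuality is equivalent to the single remaining lower bound $-\tfrac12+\tfrac12\,s_1(\langle Q_{12}Q_{21}\rangle,\langle Q_{13}Q_{31}\rangle,\langle Q_{23}Q_{32}\rangle)\le\Delta'_0$, which is (\ref{eq:LG general}). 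Your reduction to a $5$-cycle with the two connection couplings pinned at their comonotone extremes is correct (the first coordinate of $C'_0$ being zero forces $Q_{12}=Q_{13}$, and for a $\pm1$ pair the two marginals together with $\Pr[\,\cdot\ne\cdot\,]$ determine the full bivariate law), and your ``only if'' argument via the three-variable inequalities on $(R_1,Q_{23},Q_{32})$ plus the transport bound $|\langle R_1(Q_{2j}-Q_{2j'})\rangle|\le 2\Pr[Q_{2j}\ne Q_{2j'}]$ is clean and yields exactly (\ref{eq:LG general}). What your approach buys is an elementary, by-hand necessity proof that avoids the machine computation; what the paper's approach buys is that the same Lemma~\ref{lem:2-1} simultaneously delivers Theorem~\ref{thm:-The-degree LG}.

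The gap is the sufficiency direction, which you correctly flag as the crux but do not carry out. Your two proposed attacks are both viable in principle: Fourier--Motzkin on the $5$-cycle would succeed (and is essentially a specialisation of the paper's Lemma~\ref{lem:1-1} with one edge correlation set to $1$), and so would an explicit convex-combination construction. But neither is done, and the ``explicit gluing'' sketch is vague at the key step --- you would need to show that once $(R_1,Q_{21},Q_{31})$ is sampled, the residual freedom in attaching $(Q_{23},Q_{32})$ suffices to hit the prescribed $(Q_{23},Q_{32})$ law, and that the nonnegativity constraints on the mixing weights collapse to (\ref{eq:LG general}); this is exactly where the sign-tracking you mention becomes a genuine case analysis rather than bookkeeping. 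If you want to keep your argument self-contained and computer-free, the cleanest completion is probably to invoke (or reprove by hand) the cycle criterion $s_1(\text{five edge correlations})\le 3$ for the $5$-cycle and then substitute $\langle Q_{21}Q_{23}\rangle=1-2(C'_0)_2$, $\langle Q_{31}Q_{32}\rangle=1-2(C'_0)_3$; the sign cases with both connection edges taken positive give precisely (\ref{eq:LG general}), and you must still check that the remaining sign cases are dominated.
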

Under no-signaling condition, $\Delta_{0}$ and $\Delta'_{0}$ are
zero, and Theorems \ref{thm:main B} and \ref{thm:main LG} reduce
to the traditional non-contextuality criteria (\ref{eq:marginal selectivity B})-(\ref{eq:CHSH})
and (\ref{eq:marginal selectivity LG})-(\ref{eq:LGSZ}), respectively.
Note also that a Bell-system with $\Delta_{0}>1$ and an LG-system
with $\Delta_{0}^{'}>1$ are necessarily non-contextual, as (\ref{eq:CHSH general})
and, respectively, (\ref{eq:LG general}) then cannot be violated.

\section{Degree of Contextuality Under Signaling}

A measure of contextuality is based on the same compatibility-under-constraints
considerations as the criteria just derived. For a Bell-system, let
$\Delta_{\min}$ be the minimum value of 
\begin{equation}
\Delta=\begin{array}{l}
\Pr\left[A_{11}\not=A_{12}\right]+\Pr\left[A_{21}\not=A_{22}\right]\\
+\Pr\left[B_{11}\not=B_{21}\right]+\Pr\left[B_{12}\not=B_{22}\right]
\end{array}\label{eq:Delta}
\end{equation}
that is compatible with the observed pairs $\left(A_{ij},B_{ij}\right)_{i,j\in\left\{ 1,2\right\} }$.
It follows from the previous that the system exhibits contextuality
if and only if this $\Delta_{\min}$ exceeds the value of $\Delta_{0}$
in (\ref{eq:Delta_0}). It is natural therefore to define the degree
of contextuality in a Bell system as 
\begin{equation}
\max\left(0,\Delta_{\min}-\Delta_{0}\right)
\end{equation}
This value is well-defined and given by
\begin{thm}[contextuality degree in Bell-systems]
\label{thm:-The-degree B} The degree of contextuality in a Bell-system
is\textup{
\begin{equation}
\max\left\{ \begin{array}{l}
0,\frac{1}{2}\max_{i,j\in\left\{ 1,2\right\} }\left|\begin{array}{l}
\left\langle A_{11}B_{11}\right\rangle +\left\langle A_{12}B_{12}\right\rangle \\
+\left\langle A_{21}B_{21}\right\rangle +\left\langle A_{22}B_{22}\right\rangle \\
-2\left\langle A_{ij}B_{ij}\right\rangle 
\end{array}\right|-1-\Delta_{0}\end{array}\right\} .\label{eq:contextuality B}
\end{equation}
}
\end{thm}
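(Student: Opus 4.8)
The plan is to reduce the statement to the computation of $\Delta_{\min}$, the least value of the quantity $\Delta$ in (\ref{eq:Delta}) realizable by a joint distribution of $A_{11},B_{11},\dots,A_{22},B_{22}$ consistent with the four observed pair distributions. By Lemma~\ref{lem:C0 for B}, $\Delta_0$ in (\ref{eq:Delta_0}) is the least value of $\Delta$ consistent with the eight marginal expectations alone; imposing the pairs as well can only raise this, so $\Delta_{\min}\ge\Delta_0$, and the degree $\max(0,\Delta_{\min}-\Delta_0)$ equals $\Delta_{\min}-\Delta_0$. Setting
\[
M:=\max_{i,j\in\{1,2\}}\Bigl|\langle A_{11}B_{11}\rangle+\langle A_{12}B_{12}\rangle+\langle A_{21}B_{21}\rangle+\langle A_{22}B_{22}\rangle-2\langle A_{ij}B_{ij}\rangle\Bigr|,
\]
the theorem follows once one shows $\Delta_{\min}=\max\{\Delta_0,\tfrac12 M-1\}$: substituting this into $\max(0,\Delta_{\min}-\Delta_0)$ yields exactly (\ref{eq:contextuality B}).

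To obtain $\Delta_{\min}$ I would first prove a refinement of Theorem~\ref{thm:main B}: a vector $(p_1,p_2,p_3,p_4)$ of prospective values for $\Pr[A_{11}\neq A_{12}]$, $\Pr[A_{21}\neq A_{22}]$, $\Pr[B_{11}\neq B_{21}]$, $\Pr[B_{12}\neq B_{22}]$ is realized by some joint distribution consistent with the observed pairs \emph{if and only if} (i) each $p_k$ lies in the interval of mismatch probabilities allowed by the two pertinent $\pm1$ marginals --- an explicit interval whose left endpoint is the $k$th component of $C_0$ in (\ref{eq:C_0 general B}) --- and (ii) $M\le 2\bigl(1+p_1+p_2+p_3+p_4\bigr)$. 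This is the very analysis behind Theorem~\ref{thm:main B} (equivalently, the known facet description of a cyclic system of rank $4$, or a direct linear-programming feasibility argument over the $2^8$ joint probabilities together with LP duality), but carried out with $p_1,\dots,p_4$ left as free parameters instead of frozen at $C_0$; Theorem~\ref{thm:main B} is the case $(p_k)=C_0$. Establishing this characterization --- in particular, that constraints (i) and the single inequality (ii) already describe the whole feasible polytope, with no further facets --- is the substantive step and the main obstacle.

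Granting it, $\Delta_{\min}=\min\{\,p_1+p_2+p_3+p_4:\text{(i)--(ii)}\,\}$. The $p_k$ enter (ii) only through their sum, and (i) lets that sum range over the whole interval $[\Delta_0,\bar\Delta]$, with $\bar\Delta$ the sum of the right endpoints; hence the minimum is $\Delta_0$ when $\tfrac12 M-1\le\Delta_0$ --- which is precisely the non-contextual regime of Theorem~\ref{thm:main B}, where (\ref{eq:CHSH general}) holds --- and is $\tfrac12 M-1$ otherwise. In that second case $\tfrac12 M-1$ automatically lies in $[\Delta_0,\bar\Delta]$, because the feasible polytope is nonempty (the eight variables split into four disjoint pairs, so the product of the four observed pair distributions is a consistent joint), which forces $M\le 2(1+\bar\Delta)$. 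Either way $\Delta_{\min}=\max\{\Delta_0,\tfrac12 M-1\}$, which proves the theorem. One may bypass the last feasibility remark by noting that the set of $\Delta$-values realizable by consistent joints is an interval (by convexity), so it is enough to exhibit one consistent joint achieving each candidate minimum, and Theorem~\ref{thm:main B} already supplies one achieving $\Delta_0$.
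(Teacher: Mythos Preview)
Your proposed ``refinement of Theorem~\ref{thm:main B}'' is false as stated, so the proof has a genuine gap. You claim that a vector $(p_1,p_2,p_3,p_4)$ of mismatch probabilities is realizable by some joint distribution consistent with the observed pairs if and only if (i) each $p_k$ lies in its marginal-allowed interval and (ii) $M\le 2(1+p_1+p_2+p_3+p_4)$. Take the PR box (all marginals zero, $\langle A_{11}B_{11}\rangle=\langle A_{12}B_{12}\rangle=\langle A_{21}B_{21}\rangle=1$, $\langle A_{22}B_{22}\rangle=-1$) and the candidate $(p_1,p_2,p_3,p_4)=(1,1,0,0)$. Then (i) holds trivially and (ii) reads $4\le 6$, yet the corresponding deterministic constraints ($A_{11}=B_{11}$, $A_{12}=B_{12}$, $A_{21}=B_{21}$, $A_{22}=-B_{22}$, $A_{11}=-A_{12}$, $A_{21}=-A_{22}$, $B_{11}=B_{21}$, $B_{12}=B_{22}$) force $A_{11}=-A_{11}$, a contradiction. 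So (i)+(ii) is not sufficient; there are further facets. In terms of connection expectations $e_k=1-2p_k$, your (ii) is only the all-plus instance of the family $s_1(\text{observed})+s_0(e_1,e_2,e_3,e_4)\le 6$, and you are missing the rest of that family together with the companion family $s_0(\text{observed})+s_1(e_1,e_2,e_3,e_4)\le 6$.

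The paper proceeds differently. Lemma~\ref{lem:1} gives the \emph{complete} facet description of the polytope of compatible connection expectations (the two families above, obtained by computer-assisted facet enumeration). Lemma~\ref{lem:2} then eliminates the individual connection expectations by Fourier--Motzkin, producing a characterization directly in terms of $\Delta$ alone, namely the four inequalities~\eqref{eq:Delta_system B}. From these one reads off $\Delta_{\min}=\max\{\Delta_0,\tfrac12 s_1(\langle A_{11}B_{11}\rangle,\ldots)-1\}=\max\{\Delta_0,\tfrac12 M-1\}$, which is your target identity. Your overall strategy (reduce to computing $\Delta_{\min}$ and show $\Delta_{\min}=\max\{\Delta_0,\tfrac12 M-1\}$) is exactly right; what fails is the intermediate polytope description. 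If you want to avoid the full facet list, a valid alternative is to use (i)+(ii) only as \emph{necessary} conditions (giving $\Delta_{\min}\ge\max\{\Delta_0,\tfrac12 M-1\}$) and then, in the contextual case $\tfrac12 M-1>\Delta_0$, explicitly construct a compatible coupling with $\Delta=\tfrac12 M-1$; your current argument does not supply such a construction, and the final sentence invoking Theorem~\ref{thm:main B} does not help there, since in that case \eqref{eq:CHSH general} fails and no coupling with $\Delta=\Delta_0$ exists.
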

The degree of contextuality thus is always nonnegative. It equals
zero if and only if $\Delta_{\min}=\Delta_{0}$, which is equivalent
to (\ref{eq:CHSH general}). Returning to our motivating example (\ref{eq:PR+signaling}),
the degree of contextuality there is $\max\left(0,2\delta-1-2\left|\varepsilon\right|\right)$,
changing continuously with $\varepsilon$. 

For LG-systems the degree of contextuality is defined analogously,
as 
\[
\max\left(0,\Delta'_{\min}-\Delta'_{0}\right),
\]
where $\Delta'_{\min}$ is the smallest value of
\begin{equation}
\Delta'=\Pr\left[Q_{12}\not=Q_{13}\right]+\Pr\left[Q_{23}\not=Q_{21}\right]+\Pr\left[A_{32}\not=A_{31}\right]\label{eq:Delta'}
\end{equation}
compatible with the observed pairs $\left(Q_{ij},Q_{ji}\right)_{i<j\in\left\{ 1,2,3\right\} }$.
\begin{thm}[contextuality degree in LG-systems]
\label{thm:-The-degree LG} The degree of contextuality in an LG-system
is\textup{
\begin{equation}
\max\left\{ \begin{array}{r}
0,\frac{1}{2}\max\left\{ \begin{array}{l}
\pm\left\langle Q_{12}Q_{21}\right\rangle \pm\left\langle Q_{13}Q_{31}\right\rangle \\
\pm\left\langle Q_{23}Q_{32}\right\rangle :\\
\textnormal{number of minuses}\\
\textnormal{ is odd}
\end{array}\right\} -\frac{1}{2}-\Delta'_{0}\end{array}\right\} .\label{eq:contextuality LG}
\end{equation}
}
\end{thm}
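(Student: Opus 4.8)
The plan is to read the theorem off the definition of the degree of contextuality in an LG-system, namely \(\max\left(0,\Delta'_{\min}-\Delta'_0\right)\), where \(\Delta'_{\min}\) is the least value of \(\Delta'=p_1+p_2+p_3\), with \(p_1=\Pr\left[Q_{12}\neq Q_{13}\right]\), \(p_2=\Pr\left[Q_{21}\neq Q_{23}\right]\), \(p_3=\Pr\left[Q_{31}\neq Q_{32}\right]\), over all couplings of the six binary variables consistent with the observed distributions of \(\left(Q_{12},Q_{21}\right)\), \(\left(Q_{13},Q_{31}\right)\), \(\left(Q_{23},Q_{32}\right)\). Write \(a=\left\langle Q_{12}Q_{21}\right\rangle\), \(b=\left\langle Q_{13}Q_{31}\right\rangle\), \(c=\left\langle Q_{23}Q_{32}\right\rangle\), and \(M=\max\left\{\pm a\pm b\pm c:\textnormal{number of minuses odd}\right\}\). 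The core of the proof is the identity
\[
\Delta'_{\min}=\max\left\{\Delta'_0,\;\tfrac{1}{2}\left(M-1\right)\right\}.
\]
Granting it, \(\max\left(0,\Delta'_{\min}-\Delta'_0\right)=\max\left\{0,\tfrac{1}{2}M-\tfrac{1}{2}-\Delta'_0\right\}\), which is precisely (\ref{eq:contextuality LG}); in particular it vanishes iff \(M\le 1+2\Delta'_0\), which is the non-contextuality criterion (Theorem \ref{thm:main LG}). The argument mirrors the one behind Theorem \ref{thm:-The-degree B}, the \(4\)-cycle of a Bell-system being replaced here by a \(3\)-cycle.

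For the lower bound \(\Delta'_{\min}\ge\max\left\{\Delta'_0,\tfrac{1}{2}(M-1)\right\}\) I would use two families of inequalities valid in every coupling. First, \(\Delta'\ge\Delta'_0\): since \(\Pr\left[U\neq V\right]\ge\tfrac{1}{2}\left|\left\langle U\right\rangle-\left\langle V\right\rangle\right|\) for \(\pm1\) variables \(U,V\) on a common space, and causality forces \(\left\langle Q_{12}\right\rangle=\left\langle Q_{13}\right\rangle\), summing over the three connections gives \(\Delta'\ge\Delta'_0\) by (\ref{eq:Delta_0 prime}). Second, I would use the elementary bounds \(\left|\left\langle(X-Y)Z\right\rangle\right|\le 2\Pr\left[X\neq Y\right]\) and \(\pm\left\langle Z(U+V)\right\rangle\le 2\Pr\left[U=V\right]=1+\left\langle UV\right\rangle\), again for \(\pm1\) variables on a common space. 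Using the first bound to ``transport'' one member of two of the pairs onto their partners and the second bound to close on the third pair, one gets for instance
\[
a+b\le\left\langle Q_{13}Q_{23}\right\rangle+\left\langle Q_{13}Q_{32}\right\rangle+2\Delta'\le 1+c+2\Delta',
\]
hence \(a+b-c\le 1+2\Delta'\); the three remaining odd-minus sign patterns come out of the same chain with the signs inserted (the second bound holds with either sign on \(Z\)), so \(\Delta'\ge\tfrac{1}{2}(s-1)\) for every odd-minus \(s\) and thus \(\Delta'\ge\tfrac{1}{2}(M-1)\). Combining the two families gives the asserted lower bound on \(\Delta'_{\min}\).

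The reverse inequality \(\Delta'_{\min}\le\max\left\{\Delta'_0,\tfrac{1}{2}(M-1)\right\}\) is where the real effort lies, and I would prove it by exhibiting a coupling attaining that value; equivalently, minimizing \(\Delta'\) over the coupling polytope is a small linear program, and one checks that the bounds above are its only nontrivial dual constraints. When \(M\le 1+2\Delta'_0\) --- the non-contextual regime --- I glue the three observed pairs into one distribution of the six variables in which every connection probability sits at its marginal minimum, so that \(\Delta'=\Delta'_0\). When \(M>1+2\Delta'_0\), I build a coupling meeting all three marginal-minimum constraints whose total \(\Delta'\) equals \(\tfrac{1}{2}(M-1)\) --- concretely, holding two of the connections at their marginal minima and allotting the remaining budget to the third, with the case hypothesis guaranteeing that this budget is nonnegative and at least the third connection's marginal minimum. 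I expect the one genuinely tedious obstacle to be the verification that the \(2^{6}\) cell probabilities written down in these two constructions are nonnegative and reproduce all three observed two-variable distributions; this bookkeeping is the direct \(3\)-cycle analogue of the construction used for Bell-systems in Theorem \ref{thm:-The-degree B}. With both inequalities in hand, \(\Delta'_{\min}=\max\left\{\Delta'_0,\tfrac{1}{2}(M-1)\right\}\), and (\ref{eq:contextuality LG}) follows as in the first paragraph.
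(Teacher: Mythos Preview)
Your overall strategy is sound and yields the same formula, but it differs from the paper's approach. The paper does not build couplings by hand at all: it relies on Lemma~\ref{lem:2-1}, whose proof is a computer-assisted Fourier--Motzkin elimination of the connection expectations from the facet description of the coupling polytope. That lemma gives a complete two-sided characterization of the achievable values of $\Delta'$; the paper then just observes that the two lower bounds are mutually compatible with the two upper bounds, so $\Delta'_{\min}$ equals the larger of the lower-bound right-hand sides, which are exactly your $\Delta'_0$ and $\tfrac{1}{2}(M-1)$. Theorem~\ref{thm:-The-degree LG} then drops out.

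Your lower-bound argument is a genuine improvement in that it is elementary and needs no computer: the transport-and-close chain you wrote is correct and gives $\Delta'\ge\tfrac{1}{2}(M-1)$ directly for every coupling, and the componentwise bound gives $\Delta'\ge\Delta'_0$. Where your proposal is weaker than the paper is the achievability direction. The paper gets it for free from the converse clause of Lemma~\ref{lem:2-1} (Fourier--Motzkin guarantees that any $\Delta'$ satisfying the four inequalities is realized by some coupling), so nothing further has to be checked. Your plan instead promises an explicit coupling, but what you sketch (``hold two connections at their marginal minima and put the slack on the third'') specifies only the three connection marginals, not a joint distribution on all six variables consistent with the three observed pairs; exhibiting such a joint and verifying nonnegativity of the $2^{6}$ cells is precisely the content you would still have to supply. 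Also, your appeal to ``the construction used for Bell-systems in Theorem~\ref{thm:-The-degree B}'' does not help here, since the paper's proof of that theorem likewise goes through the Fourier--Motzkin lemma rather than an explicit coupling.
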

\global\long\def\thesection{A}

\numberwithin{equation}{section}\numberwithin{thm}{section}\numberwithin{figure}{section}

\setcounter{equation}{0}\setcounter{thm}{0}

\section*{Appendix: Proofs}

We use the convenient notion of a (probabilistic) connection \cite{21DK2014FFOP,5DK2013LNCS},
as defined in Fig. 1. We also make use of two functions: for any natural
$r$, $s_{0}\left(x_{1},\ldots,x_{2r}\right)$ stands for $\max\left\{ \left(\pm x_{1}\ldots\pm x_{2r}\right):\#\textnormal{ of minuses is even}\right\} $,
and $s_{1}\left(x_{1},\ldots,x_{r}\right)$ denotes $\max\left\{ \left(\pm x_{1}\ldots\pm x_{r}\right):\textnormal{\# of minuses is odd}\right\} $.
\begin{proof}[Proof of Lemma \ref{lem:C0 for B}]
 Consider, e.g., the distribution of the connection $\left(A_{11},A_{12}\right)$:\begin{equation}%
\begin{tabular}{|c|cc|}
\cline{2-3} 
\multicolumn{1}{c|}{} & $A_{12}=+1$ & $A_{12}=-1$\tabularnewline
\hline 
$A_{11}=+1$ & $p$ & $\Pr\left[A_{11}=1\right]-p$\tabularnewline
$A_{11}=-1$ & $\Pr\left[A_{12}=1\right]-p$ & $\ldots$\tabularnewline
\hline 
\end{tabular}\end{equation}The largest possible value for $p$ is $\min\left\{ \Pr\left[A_{11}=1\right],\Pr\left[A_{12}=1\right]\right\} $,
whence the minimum of $\Pr\left[A_{11}\not=A_{12}\right]$, which
is the sum of the entries on the minor diagonal, is $\left|\Pr\left[A_{11}=1\right]-\Pr\left[A_{12}=1\right]\right|=\frac{1}{2}\left|\left\langle A_{11}\right\rangle -\left\langle A_{12}\right\rangle \right|$. 
\end{proof}
Lemma \ref{lem:C0 for LG} is proved in the same way. 

The theorems of this paper are based on the following four lemmas.
Their proofs are computer-assisted, as they boil down to symbolically
solving large systems of linear inequalities. 
\begin{lem}
\label{lem:1}The necessary and sufficient condition for the connections
$\left(\left(A_{i1},A_{i2}\right),\left(B_{1j},B_{2j}\right)\right)_{i,j\in\left\{ 1,2\right\} }$
to be compatible with the observed pairs $\left(A_{ij},B_{ij}\right)_{i,j\in\left\{ 1,2\right\} }$
is 
\begin{equation}
\begin{array}{l}
s_{0}\left(\left\langle A_{11}B_{11}\right\rangle ,\left\langle A_{12}B_{12}\right\rangle ,\left\langle A_{21}B_{21}\right\rangle ,\left\langle A_{22}B_{22}\right\rangle \right)\\
+s_{1}\left(\left\langle A_{11}A_{12}\right\rangle ,\left\langle B_{11}B_{21}\right\rangle ,\left\langle A_{21}A_{22}\right\rangle ,\left\langle B_{12}B_{22}\right\rangle \right)\le6,\\
s_{1}\left(\left\langle A_{11}B_{11}\right\rangle ,\left\langle A_{12}B_{12}\right\rangle ,\left\langle A_{21}B_{21}\right\rangle ,\left\langle A_{22}B_{22}\right\rangle \right)\\
+s_{0}\left(\left\langle A_{11}A_{12}\right\rangle ,\left\langle B_{11}B_{21}\right\rangle ,\left\langle A_{21}A_{22}\right\rangle ,\left\langle B_{12}B_{22}\right\rangle \right)\le6.
\end{array}\label{eq:compatibility}
\end{equation}
\end{lem}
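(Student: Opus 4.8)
The plan is to read \emph{compatibility} as a linear-programming feasibility statement and then eliminate the atoms. The four bunch pairs $(A_{ij},B_{ij})$ and the four connection pairs $(A_{i1},A_{i2})$, $(B_{1j},B_{2j})$ link the eight binary variables into a single $8$-cycle: $A_{11}$ to $B_{11}$ (bunch $(1,1)$), $B_{11}$ to $B_{21}$ (connection, $j=1$), $B_{21}$ to $A_{21}$ (bunch $(2,1)$), $A_{21}$ to $A_{22}$ (connection, $i=2$), $A_{22}$ to $B_{22}$ (bunch $(2,2)$), $B_{22}$ to $B_{12}$ (connection, $j=2$), $B_{12}$ to $A_{12}$ (bunch $(1,2)$), and $A_{12}$ back to $A_{11}$ (connection, $i=1$). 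Thus compatibility says exactly that some distribution $P$ on $\{-1,+1\}^{8}$ has each of these eight consecutive-pair marginals equal to the prescribed $2\times2$ table --- a feasibility problem in the $2^{8}=256$ atom-probabilities with nonnegativity plus the marginal-matching equalities. Note that ``compatible with the observed pairs'' already presupposes that each prescribed table is a genuine distribution and that tables sharing a variable agree on its one-dimensional marginal (marginal selectivity); these are thus standing hypotheses, not things to be re-derived.

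I would dispatch necessity by a parity identity. Write the eight correlations in cyclic order, $c_{k}=\langle x_{k}x_{k+1}\rangle$ with indices mod $8$, where $(x_{1},\ldots,x_{8})=(A_{11},B_{11},B_{21},A_{21},A_{22},B_{22},B_{12},A_{12})$. For signs $\eta_{1},\ldots,\eta_{8}\in\{-1,+1\}$ with $\prod_{k}\eta_{k}=-1$, on any realization $\prod_{k}(\eta_{k}x_{k}x_{k+1})=(\prod_{k}\eta_{k})\prod_{k}x_{k}^{2}=-1$, so an odd number --- hence at least one --- of the $\pm1$ terms $\eta_{k}x_{k}x_{k+1}$ equals $-1$, giving $\sum_{k}\eta_{k}x_{k}x_{k+1}\le6$ pointwise and therefore $\sum_{k}\eta_{k}c_{k}\le6$. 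The bunch correlations sit at four of the cyclic positions and the connection correlations at the other four; a sign vector has an odd number of minuses overall precisely when it has an even number among the bunches and an odd number among the connections, or vice versa. Hence $\max_{\eta:\,\prod\eta_{k}=-1}\sum_{k}\eta_{k}c_{k}=\max\{s_{0}(\text{bunch }c\text{'s})+s_{1}(\text{connection }c\text{'s}),\ s_{1}(\text{bunch }c\text{'s})+s_{0}(\text{connection }c\text{'s})\}$, which is the left side of (\ref{eq:compatibility}); so the inequality $\le6$ just established is exactly the necessity of (\ref{eq:compatibility}).

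For sufficiency I would eliminate the $256$ atom-probabilities: Fourier--Motzkin elimination (equivalently, vertex/facet enumeration of the polytope defined by nonnegativity and the marginal-matching equalities, or the Farkas dual of the feasibility system) projects onto the coordinates entering (\ref{eq:compatibility}), and the content of the lemma is that, under the standing hypotheses above, the projection is cut out by nothing beyond those two inequalities. This is the computer-assisted step; I expect the main obstacle to be \emph{completeness} --- showing that no further facet survives the projection, in particular no facet mixing in the one-dimensional marginals --- since the raw constraint system is large. The computation is made tractable by the symmetry group generated by the sign flips $x_{k}\mapsto-x_{k}$: flipping a variable permutes the bunch and connection tables and the sign vectors $\eta$, so only a handful of orbit representatives of the dual system need be solved symbolically, and each reduces to comparing a balanced $\pm$ combination of correlations against $6$. (Alternatively one may quote the known characterization --- a cyclic system of rank $n$ with consistent connectedness admits a joint distribution iff $s_{1}$ of its $n$ correlations is $\le n-2$ --- at $n=8$; the above is the self-contained check of that instance.)
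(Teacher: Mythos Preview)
Your proposal is correct, and for the sufficiency direction it coincides with the paper's proof: both set up the $2^{8}$-atom feasibility system and extract the facet description by computer (the paper uses vertex/facet enumeration of the polytope $\{M\mathbf{q}:\mathbf{q}\ge0\}$, obtaining $160$ inequalities of which $128$ collapse to (\ref{eq:compatibility}) and the remaining $32$ are the trivial nonnegativity constraints you call the standing hypotheses).

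Where you differ is in the treatment of necessity and in the structural framing. The paper does not isolate the $8$-cycle and simply reads off both directions from the computed half-space representation; you instead exhibit the cycle $A_{11}\!-\!B_{11}\!-\!B_{21}\!-\!A_{21}\!-\!A_{22}\!-\!B_{22}\!-\!B_{12}\!-\!A_{12}\!-\!A_{11}$ and derive necessity by the pointwise parity identity $\prod_{k}x_{k}x_{k+1}=1$, which gives $\sum_{k}\eta_{k}c_{k}\le6$ for every odd sign pattern without any computation. Your observation that the odd-parity maximum over eight signs splits as $\max\{s_{0}+s_{1},\,s_{1}+s_{0}\}$ across the four bunch and four connection correlations is exactly what identifies (\ref{eq:compatibility}) with the single condition $s_{1}(c_{1},\ldots,c_{8})\le6$, linking the lemma to the general rank-$n$ cyclic criterion. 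This buys a transparent, machine-free necessity argument and an explanation of \emph{why} the answer has the $s_{0}/s_{1}$ form; the paper's route buys the sufficiency certificate directly, which you still defer to the same computer step.
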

\begin{proof}
The joint distribution of the eight random variables $A_{ij},B_{ij}$,
$i,j\in\left\{ 1,2\right\} $, is fully described by the vector $\mathbf{q}\in[0,1]^{n},$
$q_{1}+\dots+q_{n}=1$, consisting of the probabilities of the $n=2^{8}$
different combinations of the values of the $8$ random variables.
We define a vector $\mathbf{p}\in[0,1]^{m}$, $m=32$, consisting
of the $16$ observed probabilities $\Pr[A_{ij}=a,\ B_{ij}=b]$ and
the $16$ connection probabilities $\Pr[A_{i1}=a,\ A_{i2}=a']$ and
$\Pr[B_{1j}=b,\ B_{2j}=b']$, where $a,a',b,b'\in\{-1,1\}$ and $i,j\in\{1,2\}$.
The observed probabilities are compatible with the connection probabilities
if and only if there exists an $n$-vector $\mathbf{q}\ge0$ (componentwise)
such that $\mathbf{p}=M\mathbf{q}$, where $M\in\{0,1\}^{m\times n}$
determines which components of $\mathbf{q}$ sum to each component
of $\mathbf{p}$. As described in Text~S3 of Ref. \cite{22DK2013PLOS1},
the set of vectors $\mathbf{p}$ forms a polytope whose vertices are
given by the columns of $M$ and whose half-space representation can
be obtained by a facet enumeration algorithm. This half-space representation
consists of $160$ inequalities, as well as $16$ equations ensuring
that the marginals of the observed probabilities agree with those
of the connections and the probabilities are properly normalized.
Expressing the probabilities in $\mathbf{p}$ in terms of the observed
and connection expectations $\left(\left\langle A_{ij}B_{ij}\right\rangle ,\left\langle A_{ij}\right\rangle ,\left\langle B_{ij}\right\rangle ,\left\langle A_{i1}A_{i2}\right\rangle ,\left\langle B_{1j}B_{2j}\right\rangle \right)$,
$i,j\in\{1,2\}$, the $16$ equations become identically true (the
parameterization alone guarantees them), and of the $160$ inequalities,
$128$ turn into exactly those represented by \eqref{eq:compatibility};
the remaining $32$ inequalities need not be listed as they are constraints
of the form $-1+|\left\langle A\right\rangle +\left\langle B\right\rangle |\le\left\langle AB\right\rangle \le1-|\left\langle A\right\rangle -\left\langle B\right\rangle |$,
trivially following from the nonnegativity of probabilities.
\end{proof}
This proof is different from the similar result in Ref. \cite{22DK2013PLOS1}
in that the parameterization for the probabilities in $\mathbf{p}$
is more general (allowing for arbitrary marginals of the eight random
variables) and so we obtain a more general condition for the compatibility
of observed and connection probabilities.

\begin{figure}
\medskip{}
\fbox{\begin{minipage}[t]{1\columnwidth}%
\[
\begin{array}{c}
\xymatrix{A_{12}\ar@{.>}[d]\ar[r] & B_{12}\ar@{.>}[r]\ar[l] & B_{22}\ar@{.>}[l]\ar[r] & A_{22}\ar@{.>}[d]\ar[l]\\
A_{11}\ar@{.>}[u]\ar[r] & B_{11}\ar@{.>}[r]\ar[l] & B_{21}\ar@{.>}[l]\ar[r] & A_{21}\ar@{.>}[u]\ar[l]
}
\end{array}\tag{Bell-system}
\]
\[
\begin{array}{c}
\xymatrix{Q_{21}\ar@{.>}[dr]\ar[r] & Q_{12}\ar@{.>}[r]\ar[l] & Q_{13}\ar@{.>}[l]\ar[r] & Q_{31}\ar@{.>}[dl]\ar[l]\\
 & Q_{23}\ar@{.>}[ul]\ar[r] & Q_{32}\ar@{.>}[ur]\ar[l]
}
\end{array}\tag{LG-system}
\]
\end{minipage}}

\protect\caption[.]{Random variables involved in the Bell-system and LG-system. The pairs
of random variables whose joint distributions are empirically observed,
e.g., $\left(A_{12},B_{12}\right)$ and $\left(Q_{12},Q_{21}\right)$,
are indicated by solid double-arrows. The pairs of random variables
forming probabilistic connections (with unobservable joint distributions)
are indicated by point double-arrows, e.g., $\left(A_{11},A_{12}\right)$
and $\left(Q_{12},Q_{13}\right)$. Lemmas \ref{lem:C0 for B} and
\ref{lem:C0 for LG} are about connections whose components are as
close to being identical as possible; Theorems \ref{thm:main B} and
\ref{thm:main LG} are about connections compatible with the observed
pairs.\label{fig:A-representation-of} }
\end{figure}
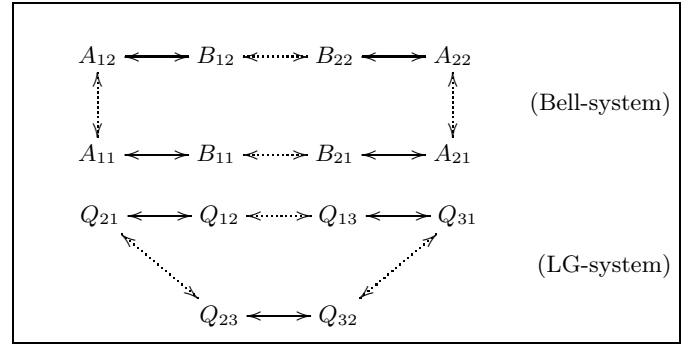

\begin{lem}
\label{lem:1-1}The necessary and sufficient condition for the connections
$\left(Q_{12},Q_{13}\right)$, $\left(Q_{21},Q_{23}\right)$, $\left(Q_{31},Q_{32}\right)$
to be compatible with the observed pairs $\left(Q_{12},Q_{21}\right)$,
$\left(Q_{13},Q_{31}\right)$, $\left(Q_{23},Q_{32}\right)$ is 
\begin{equation}
s_{1}\left(\begin{array}{r}
\left\langle Q_{12}Q_{21}\right\rangle ,\left\langle Q_{13}Q_{31}\right\rangle ,\left\langle Q_{23}Q_{32}\right\rangle ,\\
\left\langle Q_{12}Q_{13}\right\rangle ,\left\langle Q_{21}Q_{23}\right\rangle ,\left\langle Q_{31}Q_{32}\right\rangle 
\end{array}\right)\le4.\label{eq:compatibility-1}
\end{equation}

\end{lem}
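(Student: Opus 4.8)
The plan is to prove Lemma~\ref{lem:1-1} exactly as the proof of Lemma~\ref{lem:1} handles the Bell case, only on a smaller system, after first recording the structural picture that explains the answer. As Fig.~\ref{fig:A-representation-of} makes clear, the six prescribed pairwise distributions --- the observed pairs $\left(Q_{12},Q_{21}\right),\left(Q_{13},Q_{31}\right),\left(Q_{23},Q_{32}\right)$ and the connections $\left(Q_{12},Q_{13}\right),\left(Q_{21},Q_{23}\right),\left(Q_{31},Q_{32}\right)$ --- pair up the six $\pm1$ variables so that every variable lies in exactly two of them, and the resulting graph is a single $6$-cycle,
\[
Q_{12}-Q_{21}-Q_{23}-Q_{32}-Q_{31}-Q_{13}-Q_{12}.
\]
So the lemma asserts that valid pairwise distributions of binary variables around a $6$-cycle extend to a joint distribution if and only if the single cyclic inequality \eqref{eq:compatibility-1} holds, with bound $4=6-2$; this is the ``cut polytope of $C_{6}$'' fact, and it is what I would ultimately establish.

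For the ``only if'' direction I would argue pointwise. In any $\pm1$ realization the product of the six edge-products $Q_{12}Q_{21},\ Q_{21}Q_{23},\ Q_{23}Q_{32},\ Q_{32}Q_{31},\ Q_{31}Q_{13},\ Q_{13}Q_{12}$ equals $\prod_{ij}\left(Q_{ij}\right)^{2}=1$, so an even number of them equal $-1$. Hence, for any sign pattern with an odd number of minuses, an odd --- and therefore nonzero --- number of the signed edge-products equal $-1$, so their sum cannot exceed $6-2=4$. Taking expectations and maximizing over all odd sign patterns yields $s_{1}\le4$, which is \eqref{eq:compatibility-1}.

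For the ``if'' direction I would run the same computer-assisted facet enumeration used in Lemma~\ref{lem:1}. Describe the joint distribution of the six variables by $\mathbf{q}\in[0,1]^{64}$ with $q_{1}+\dots+q_{64}=1$; let $\mathbf{p}\in[0,1]^{24}$ collect the $12$ observed pair probabilities and the $12$ connection probabilities, and let $M\in\{0,1\}^{24\times64}$ record which entries of $\mathbf{q}$ sum to each entry of $\mathbf{p}$. Compatibility holds if and only if $\mathbf{p}=M\mathbf{q}$ for some $\mathbf{q}\ge0$, that is, if and only if $\mathbf{p}$ lies in the polytope whose vertices are the columns of $M$; I would compute its half-space representation by facet enumeration as in Text~S3 of Ref.~\cite{22DK2013PLOS1}, re-express $\mathbf{p}$ in the expectations $\left\langle Q_{ij}\right\rangle$ together with the six pair expectations so that the equality constraints (marginal agreement and normalization) become identities, and then check that the surviving inequalities are precisely the $32$ odd-sign instances packaged by \eqref{eq:compatibility-1}, together with the trivially valid per-pair bounds of the form $-1+\left|\left\langle Q\right\rangle+\left\langle Q'\right\rangle\right|\le\left\langle QQ'\right\rangle\le1-\left|\left\langle Q\right\rangle-\left\langle Q'\right\rangle\right|$.

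The step I expect to be the real work is that last simplification: verifying that after re-parameterization \emph{no} facet depending on the individual (signaling) expectations $\left\langle Q_{ij}\right\rangle$ survives as a genuine constraint --- that the marginals drop out entirely and only the pure cyclic inequality is left --- and that \eqref{eq:compatibility-1} really does capture all $32$ odd-sign facets at once. As a cross-check I would lean on the first paragraph: the cut polytope of $C_{6}$ (equivalently, the known characterization of jointly realizable cyclic systems of $\pm1$ variables) predicts exactly this facet structure, and since the pointwise argument already establishes ``only if'' independently, any discrepancy in the enumeration output would be easy to catch.
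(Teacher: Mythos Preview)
Your proposal is correct and matches the paper's own approach: the paper's proof of this lemma is simply ``analogous to that of Lemma~\ref{lem:1},'' i.e., the same computer-assisted facet enumeration on the smaller LG system, which is exactly the machinery you set up. Your additional pointwise parity argument for the ``only if'' direction and the identification of the $6$-cycle / cut-polytope-of-$C_{6}$ structure go beyond what the paper records, but the core method is identical.
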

The proof is analogous to that of Lemma~\ref{lem:1}.
\begin{lem}
\label{lem:2}If the connections $\left(\left(A_{i1},A_{i2}\right),\left(B_{1j},B_{2j}\right)\right)_{i,j\in\left\{ 1,2\right\} }$
are compatible with the observed pairs $\left(A_{ij}B_{ij}\right)_{i,j\in\left\{ 1,2\right\} }$,
then, with $\Delta$ defined as in (\ref{eq:Delta}), 
\begin{equation}
\begin{array}{l}
\Delta\ge-1+\frac{1}{2}s_{1}\left(\begin{array}{r}
\left\langle A_{11}B_{11}\right\rangle ,\left\langle A_{12}B_{12}\right\rangle ,\\
\left\langle A_{21}B_{21}\right\rangle ,\left\langle A_{22}B_{22}\right\rangle 
\end{array}\right),\\
\Delta\ge\frac{1}{2}\left(\begin{array}{l}
\left|\left\langle A_{11}\right\rangle -\left\langle A_{12}\right\rangle \right|+\left|\left\langle A_{21}\right\rangle -\left\langle A_{22}\right\rangle \right|\\
+\left|\left\langle B_{11}\right\rangle -\left\langle B_{21}\right\rangle \right|+\left|\left\langle B_{12}\right\rangle -\left\langle B_{22}\right\rangle \right|
\end{array}\right),\\
\Delta\le5-\frac{1}{2}s_{1}\left(\begin{array}{r}
\left\langle A_{11}B_{11}\right\rangle ,\left\langle A_{12}B_{12}\right\rangle ,\\
\left\langle A_{21}B_{21}\right\rangle ,\left\langle A_{22}B_{22}\right\rangle 
\end{array}\right),\\
\Delta\le4-\frac{1}{2}\left(\begin{array}{l}
\left|\left\langle A_{11}\right\rangle +\left\langle A_{12}\right\rangle \right|+\left|\left\langle A_{21}\right\rangle +\left\langle A_{22}\right\rangle \right|\\
+\left|\left\langle B_{11}\right\rangle +\left\langle B_{21}\right\rangle \right|+\left|\left\langle B_{12}\right\rangle +\left\langle B_{22}\right\rangle \right|
\end{array}\right).
\end{array}\label{eq:Delta_system B}
\end{equation}
Conversely, if these inequalities are satisfied for a given value
of $\Delta$, then the connection distributions can always be chosen
so that yield this value of $\Delta$ and are compatible with the
distributions of the observed pairs.\end{lem}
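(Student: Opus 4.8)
The plan is to read (\ref{eq:Delta_system B}) as a description of the range of the linear functional $\Delta$ over the same polytope of joint distributions that underlies Lemma~\ref{lem:1}, and then to pin down the two endpoints of that (necessarily interval) range by a symbolic linear-programming / projection computation, in complete parallel with the proof of Lemma~\ref{lem:1}.

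\emph{Reduction to a small linear program.} Using $\Pr[X\ne Y]=\tfrac12(1-\left\langle XY\right\rangle)$ for $\pm1$ variables, put $c_{1}=\left\langle A_{11}A_{12}\right\rangle$, $c_{2}=\left\langle B_{11}B_{21}\right\rangle$, $c_{3}=\left\langle A_{21}A_{22}\right\rangle$, $c_{4}=\left\langle B_{12}B_{22}\right\rangle$; then $\Delta=2-\tfrac12(c_{1}+c_{2}+c_{3}+c_{4})$, so it is equivalent to describe the achievable values of $\Sigma:=c_{1}+c_{2}+c_{3}+c_{4}$. Holding the eight observed marginals and four observed correlations fixed as parameters, the proof of Lemma~\ref{lem:1} shows (once its ``trivial'' inequalities are also read off at the four connection pairs, not just the observed pairs) that $(c_{1},c_{2},c_{3},c_{4})$ is realizable by some joint distribution compatible with the observed pairs exactly when the two inequalities of (\ref{eq:compatibility}) hold together with the box constraints $\left|\left\langle A_{11}\right\rangle+\left\langle A_{12}\right\rangle\right|-1\le c_{1}\le1-\left|\left\langle A_{11}\right\rangle-\left\langle A_{12}\right\rangle\right|$ and the three analogues (abbreviate these as $\ell_{k}\le c_{k}\le u_{k}$). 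Since $s_{0},s_{1}$ are each a maximum of finitely many linear forms, this is a bounded, and whenever consistent nonempty, linear system in $(c_{1},\dots,c_{4})$; hence $\Sigma$ ranges over a closed interval $[\Sigma_{\min},\Sigma_{\max}]$ and everything reduces to computing these two endpoints as functions of the parameters.

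\emph{Necessity of the four inequalities.} Two of them come out immediately: summing the four upper box constraints gives $\Sigma\le\sum_{k}u_{k}=4-2\Delta_{0}$, i.e. $\Delta\ge\Delta_{0}$ (this is also Lemma~\ref{lem:C0 for B} termwise); and since the all-plus sign pattern gives $s_{0}(c_{1},c_{2},c_{3},c_{4})\ge\Sigma$, the second line of (\ref{eq:compatibility}) forces $\Sigma\le6-s_{1}(\left\langle A_{11}B_{11}\right\rangle,\ldots)$, i.e. $\Delta\ge-1+\tfrac12 s_{1}(\left\langle A_{11}B_{11}\right\rangle,\ldots)$. The remaining two bounds then follow for free from a relabelling symmetry: negating the four variables $A_{12},A_{22},B_{21},B_{22}$ maps a Bell-system to a Bell-system, sends $\Delta\mapsto4-\Delta$, turns each $\left|\left\langle A\right\rangle-\left\langle B\right\rangle\right|$ occurring in $\Delta_{0}$ into the corresponding $\left|\left\langle A\right\rangle+\left\langle B\right\rangle\right|$, and leaves $s_{1}(\left\langle A_{11}B_{11}\right\rangle,\ldots)$ invariant (exactly two of its four arguments change sign, and $s_{1}$ is invariant under an even number of sign flips of its arguments); applying the two bounds just proved to the relabelled system then yields precisely $\Delta\le4-\tfrac12(\left|\left\langle A_{11}\right\rangle+\left\langle A_{12}\right\rangle\right|+\cdots)$ and $\Delta\le5-\tfrac12 s_{1}(\left\langle A_{11}B_{11}\right\rangle,\ldots)$.

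\emph{Sufficiency, and where the difficulty lies.} It remains to show that the true extremes are $\Sigma_{\max}=\min\{6-s_{1}(\left\langle A_{11}B_{11}\right\rangle,\ldots),\ \sum_{k}u_{k}\}$ and $\Sigma_{\min}=\max\{s_{1}(\left\langle A_{11}B_{11}\right\rangle,\ldots)-6,\ \sum_{k}\ell_{k}\}$; by the same relabelling symmetry the two claims are equivalent, so it suffices to establish the formula for $\Sigma_{\max}$. I would do this by eliminating $c_{1},\dots,c_{4}$ from the (large) linear system above by Fourier--Motzkin elimination---equivalently, by running a facet-enumeration algorithm on the polytope whose coordinates are the observed probabilities together with the single coordinate $\Delta$, exactly as in Ref.~\cite{22DK2013PLOS1}---and verifying that the surviving facet inequalities are precisely the two claimed ones, every other output being implied by them or being the harmless $0\le\Delta\le4$. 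Once the endpoint formulas are confirmed, the converse statement of the lemma is automatic: if $\Delta$ satisfies (\ref{eq:Delta_system B}) then $4-2\Delta\in[\Sigma_{\min},\Sigma_{\max}]$, so by convexity some feasible $(c_{1},\dots,c_{4})$ has $c_{1}+c_{2}+c_{3}+c_{4}=4-2\Delta$, and any connection distributions with those correlations and the (forced) connection marginals are compatible with the observed pairs and realize that value of $\Delta$. The main obstacle is entirely this sufficiency step: $\Sigma_{\max}$ is piecewise linear in the twelve observed parameters through the nested maxima $s_{0},s_{1}$ and the absolute values, so a by-hand case analysis fragments into many sign regions (both for the $c_{k}$ and for the argument patterns attaining $s_{0},s_{1}$), and one must check that in every region the optimum is attained by---and never exceeds---one of the two claimed linear forms; this is exactly why the elimination is carried out symbolically by machine.
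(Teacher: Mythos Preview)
Your proposal is correct and, at its core, coincides with the paper's own argument: both start from the full linear system characterizing compatibility in Lemma~\ref{lem:1} (the $128$ inequalities encoded by (\ref{eq:compatibility}) plus the $32$ box constraints), adjoin the linear definition of $\Delta$ in terms of the connection correlations, and then eliminate the four connection-correlation variables by Fourier--Motzkin to obtain (\ref{eq:Delta_system B}). The paper performs the entire reduction by machine in one pass; you reach the same destination but add two ingredients the paper does not make explicit. First, you prove the two lower bounds on $\Delta$ by hand---one from the all-plus pattern in $s_{0}$ feeding into the second line of (\ref{eq:compatibility}), the other from summing the four connection box constraints---so that only the sufficiency direction actually needs the computer. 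Second, you invoke the relabelling $A_{12},A_{22},B_{21},B_{22}\mapsto -A_{12},-A_{22},-B_{21},-B_{22}$, which sends $\Delta\mapsto 4-\Delta$, swaps each $\lvert\langle\cdot\rangle-\langle\cdot\rangle\rvert$ with the corresponding $\lvert\langle\cdot\rangle+\langle\cdot\rangle\rvert$, and leaves $s_{1}(\langle A_{ij}B_{ij}\rangle)$ invariant (an even number of its arguments change sign); this halves the remaining work by deriving the two upper bounds from the two lower ones. These additions give more insight into \emph{why} (\ref{eq:Delta_system B}) has exactly this shape, but the decisive sufficiency step---confirming that no further facet inequalities survive the elimination---is the same computer-assisted verification in both versions.
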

\begin{proof}
Given the 160 inequalities of Lemma~\ref{lem:1} (characterizing
the compatibility of the connections with the observed pairs), we
add to this linear system the equation defining $\Delta$ in terms
of the expectations $\left(\left\langle A_{i1}A_{i2}\right\rangle ,\left\langle B_{1j}B_{2j}\right\rangle ,\left\langle A_{ij}\right\rangle ,\left\langle B_{ij}\right\rangle \right)_{i,j\in\{1,2\}}$.
Then we use this equation to eliminate one of the connection expectation
variables $\left(\left\langle A_{i1}A_{i2}\right\rangle ,\left\langle B_{1j}B_{2j}\right\rangle \right)_{i,j\in\{1,2\}}$
from the system (by solving the equation for this variable and then
substituting the solution everywhere else). After that, we eliminate
the three remaining connection expectation variables one by one using
the Fourier-Motzkin elimination algorithm \cite{FME}. Then we remove
any redundant inequalities from the system by linear programming using
the algorithm described in Ref. \cite{22DK2013PLOS1}, Text~S3. After
having eliminated all connection expectation variables and having
deleted the inequalities following from the nonnegativity of probabilities,
we are left with the system \eqref{eq:Delta_system B}. The Fourier-Motzkin
elimination algorithm guarantees that the resulting system has a solution
precisely when the original system has a solution with some values
of the eliminated variables.\end{proof}
\begin{lem}
\label{lem:2-1}If the connections $\left(Q_{12},Q_{13}\right),\left(Q_{21},Q_{23}\right),\left(Q_{31},Q_{32}\right)$
are compatible with the observed pairs $\left(Q_{12},Q_{21}\right),\left(Q_{13},Q_{31}\right),\left(Q_{23},Q_{32}\right)$,
then, with $\Delta'$ defined as in (\eqref{eq:Delta'}), 
\begin{equation}
\begin{array}{l}
\Delta'\ge-\frac{1}{2}+\frac{1}{2}s_{1}\left(\left\langle Q_{12}Q_{21}\right\rangle ,\left\langle Q_{13}Q_{31}\right\rangle ,\left\langle Q_{23}Q_{32}\right\rangle \right),\\
\Delta'\ge\frac{1}{2}\left(\begin{array}{l}
\left|\left\langle Q_{12}\right\rangle -\left\langle Q_{13}\right\rangle \right|\\
+\left|\left\langle Q_{21}\right\rangle -\left\langle Q_{23}\right\rangle \right|+\left|\left\langle Q_{31}\right\rangle -\left\langle Q_{32}\right\rangle \right|
\end{array}\right),\\
\Delta'\le\frac{7}{2}-\frac{1}{2}s_{1}\left(\left\langle Q_{12}Q_{21}\right\rangle ,\left\langle Q_{13}Q_{31}\right\rangle ,\left\langle Q_{23}Q_{32}\right\rangle \right),\\
\Delta'\le3-\frac{1}{2}\left(\begin{array}{l}
\left|\left\langle Q_{12}\right\rangle +\left\langle Q_{13}\right\rangle \right|\\
+\left|\left\langle Q_{21}\right\rangle +\left\langle Q_{23}\right\rangle \right|+\left|\left\langle Q_{31}\right\rangle +\left\langle Q_{32}\right\rangle \right|
\end{array}\right).
\end{array}\label{eq:Delta_system LG}
\end{equation}
Conversely, if these inequalities are satisfied for a given value
of $\Delta'$, then the connection distributions can always be chosen
so that yield this value of $\Delta'$ and are compatible with the
distributions of the observed pairs.
\end{lem}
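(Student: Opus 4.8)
The plan is to follow the proof of Lemma~\ref{lem:2} line for line, replacing the eight Bell variables by the six LG variables $Q_{12},Q_{13},Q_{21},Q_{23},Q_{31},Q_{32}$. First I would write down, not the collapsed condition \eqref{eq:compatibility-1}, but the full half-space representation of the polytope of vectors $\mathbf{p}$ underlying Lemma~\ref{lem:1-1}: here $\mathbf{p}$ lists the three observed pair distributions $\Pr[Q_{ij}=a,\,Q_{ji}=b]$ together with the three connection distributions $\Pr[Q_{12}=a,\,Q_{13}=a']$, $\Pr[Q_{21}=a,\,Q_{23}=a']$, $\Pr[Q_{31}=a,\,Q_{32}=a']$, and $\mathbf{p}=M\mathbf{q}$ ranges over $\mathbf{q}\ge 0$, $\sum q_i=1$, with $\mathbf{q}$ a distribution on $\{-1,1\}^6$. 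Obtained by the same facet-enumeration route used for the Bell-system, this description consists of the $2^5=32$ facet inequalities that unfold into \eqref{eq:compatibility-1}, a block of trivial inequalities expressing nonnegativity of the observed-pair and connection probabilities, and marginal-matching equations that become identities once everything is reparameterized by the expectations $\langle Q_{ij}Q_{ji}\rangle$, $\langle Q_{12}Q_{13}\rangle$, $\langle Q_{21}Q_{23}\rangle$, $\langle Q_{31}Q_{32}\rangle$ and the single-variable marginals $\langle Q_{ij}\rangle$.

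Next I would adjoin the definition of $\Delta'$. Since each $Q$ is $\pm1$-valued, $\Pr[Q_{ij}\ne Q_{ik}]=\tfrac{1}{2}(1-\langle Q_{ij}Q_{ik}\rangle)$, so \eqref{eq:Delta'} is the single linear equation
\[
\langle Q_{12}Q_{13}\rangle+\langle Q_{21}Q_{23}\rangle+\langle Q_{31}Q_{32}\rangle=3-2\Delta'.
\]
I would solve this for one connection expectation, say $\langle Q_{31}Q_{32}\rangle$, substitute it everywhere, and then eliminate the two remaining connection expectations $\langle Q_{12}Q_{13}\rangle$ and $\langle Q_{21}Q_{23}\rangle$ one at a time by Fourier--Motzkin elimination \cite{FME}, removing redundant inequalities by linear programming after each step as in Ref.~\cite{22DK2013PLOS1}, Text~S3. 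Finally I would discard the surviving inequalities that merely restate $-1\le\langle Q_{ij}Q_{ji}\rangle\le1$ together with the marginal bounds; the claim is that exactly the four families in \eqref{eq:Delta_system LG} remain --- the two $s_1$-type bounds descend from the facets of \eqref{eq:compatibility-1}, and the two absolute-value bounds arise from combining, during elimination, the trivial bounds on $\langle Q_{ij}Q_{ji}\rangle$ with the constraints tying the connection expectations to the marginals. Soundness of Fourier--Motzkin (the projection of a feasible polyhedron is exactly the solution set of the reduced system) yields both directions at once: \eqref{eq:Delta_system LG} holds for the given observed expectations, marginals, and value of $\Delta'$ if and only if some $\mathbf{q}\ge0$ realizes those observed pairs and some connection distributions with that $\Delta'$; i.e., iff compatible connections achieving that value of $\Delta'$ exist.

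The step I expect to be the real obstacle is the bookkeeping in the elimination rather than anything conceptual: Fourier--Motzkin can sharply inflate the inequality count at each stage, so the redundancy-removal linear program must be applied carefully, and one must certify (by hand or by an exact LP) that precisely the four families of \eqref{eq:Delta_system LG} survive and that nothing non-trivial is deleted alongside the probability-nonnegativity constraints. A secondary point worth checking is that $\langle Q_{12}\rangle-\langle Q_{13}\rangle$ genuinely need not vanish in this purely combinatorial reduction --- it does not, since the linear program treats $t_1,t_2,t_3$ as abstract labels --- so the term $\left|\langle Q_{12}\rangle-\langle Q_{13}\rangle\right|$ is retained in the second inequality of \eqref{eq:Delta_system LG}, consistently with the component written as $0$ in \eqref{eq:C_0 general LG} and the remark following Lemma~\ref{lem:C0 for LG}.
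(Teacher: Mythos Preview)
Your proposal is correct and follows exactly the approach the paper intends: the paper's proof of Lemma~\ref{lem:2-1} is literally ``analogous to that of Lemma~\ref{lem:2},'' and you have simply spelled out that analogy in detail---full half-space description underlying Lemma~\ref{lem:1-1}, adjoin the linear equation for $\Delta'$, eliminate the three connection expectations by substitution plus Fourier--Motzkin, prune redundancies by LP, and drop the trivial nonnegativity constraints. Your counts ($2^{5}=32$ facets for \eqref{eq:compatibility-1}) and your remark that $\left|\langle Q_{12}\rangle-\langle Q_{13}\rangle\right|$ is retained at this purely combinatorial stage are both accurate and consistent with the paper.
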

The proof is analogous to that of Lemma~\eqref{lem:2}.
\begin{proof}[Proof of Theorems \ref{thm:main B} and \ref{thm:-The-degree B}]
 Inequalities \eqref{eq:Delta_system B} in Lemma \ref{lem:1-1}
can be easily checked to be mutually compatible, whence $\Delta_{\min}$
is the larger of the two right-hand expressions in the first and third
of them. Note that $s_{1}(\cdots)$ is the same as $\max\left|\ldots\right|$-part
of \eqref{eq:contextuality B}. This proves Theorem \ref{thm:main B},
and Theorem \ref{thm:-The-degree B} follows as an explication of
$\Delta_{\min}=\Delta_{0}$. 
\end{proof}
The proofs of Theorems \ref{thm:main LG} and \ref{thm:-The-degree LG}
follows from Lemma \ref{lem:2-1} analogously. 

\smallskip{}

This work was supported by NSF grant SES-1155956. The authors are
grateful to J. Acacio de Barros, Gary Oas, Jan-Åke Larsson, and Guido
Bacciagaluppi for helpful discussions of issues related to probabilistic
contextuality.

\end{document}